\newlength{\btlabelwidth}\setlength{\btlabelwidth}{\labelwidth}
\newlength{\btleftmargin}\setlength{\btleftmargin}{\leftmargin}
\newenvironment{btlists*}{\begin{list}{{\rm--}}{%
\setlength{\labelwidth}{\btlabelwidth}\setlength{\leftmargin}{\btleftmargin}%
\addtolength{\leftmargin}{-1.3em}%
\setlength{\topsep}{0pt plus0.2ex}%
\setlength{\itemsep}{0ex plus0.2ex}%
\setlength{\parsep}{0pt plus0.2ex}}}{\end{list}}
\title{Accepting Hybrid Networks of Evolutionary Processors with Special Topologies and Small Communication}
\author{J\"urgen Dassow \qquad\qquad
Florin Manea\thanks{Florin Manea is on leave from {\it Faculty of Mathematics and Computer Science,
University of Bucharest}. His work is supported by the {\it Alexander-von-Humboldt Foundation}.}
\institute{Otto-von-Guericke-Universit\"at Magdeburg, Fakult\"at f\"ur Informatik \\
PSF 4120, D-39016 Magdeburg, Germany}
\email{\{dassow,manea\}@iws.cs.uni-magdeburg.de}
}
\begin{document}
\maketitle
\newcommand{\mathbb}{\bf}
\newcommand{\ra}{\rightarrow}
\newcommand{\lf}{\langle}
\newcommand{\spade}{\spadesuit}
\newcommand{\heart}{\heartsuit}
\newcommand{\rf}{\rangle}
\newcommand{\lc}{\prec\!\!}
\newcommand{\rc}{\!\!\succ}
\newcommand{\Lra}{\Longrightarrow}
\newcommand{\Ra}{\Rightarrow}
\newcommand{\lra}{\longrightarrow}
\newcommand{\bigo}{{\mathcal O}}
\def\eps{\varepsilon}
\def\mand{\mbox{ and }}
\def\mfor{\mbox{ for }}
\newcommand{\s}{\sigma}
\renewcommand{\t}{\tau}

\newcommand{\PI}{{\mathit PI}}
\newcommand{\FI}{{\mathit FI}}
\newcommand{\PO}{{\mathit PO}}
\newcommand{\FO}{{\mathit FO}}
\newcommand{\NTIME}{{\mathit NTIME}}
\newcommand{\Sub}{{\mathit Sub}}
\newcommand{\Del}{{\mathit Del}}
\newcommand{\Ins}{{\mathit Ins}}
\newcommand{\alphp}{{\mathit alph}}

\newtheorem{theorem}{\bf Theorem}
\newtheorem{corollary}{\bf Corollary}
\begin{abstract}
Starting from the fact that complete Accepting Hybrid Networks of Evolutionary Processors allow much communication
between the nodes and are far from network structures used in practice, we propose in this paper three network
topologies that restrict the communication: star networks, ring networks, and grid networks. We show that ring-AHNEPs
can simulate $2$-tag systems, thus we deduce the existence of a universal ring-AHNEP. For star networks or grid networks, we show a more general result; that is, each recursively enumerable language can be
accepted efficiently by a star- or grid-AHNEP. We also present bounds for the size of these star and grid
networks. As a consequence we get that each recursively enumerable can be accepted by networks with at most 13
communication channels and by networks where each node communicates with at most three other
nodes.\end{abstract}

\section{Introduction}
Motivated by parallel architectures, networks of language processors were introduced by E.~Csuhaj-Varj\'u and
A.~Salomaa in \cite{CsuSal97}. They consist of processors which process a word according to some rewriting rules and
communicate by sending the words to other nodes via filters. Starting from finite sets of word in the nodes a
language is generated consisting of all nodes occurring in some distinguished node.
In \cite{juan_acta} networks with evolutionary processors were introduced where the rewriting rules are restricted to
substitutions, deletions and insertions of letters which describe the local mutations known from
biology. It was shown by them that networks with evolutionary networks can generate all recursively enumerable
languages.

In \cite{dna10}, an accepting counterpart for networks with evolutionary processors was introduced. A distinguished
node contains a word which is processed by the network and accepted if a word occurs in another
distinguished node. It was shown that any recursively enumerable language can be accepted by accepting networks with
evolutionary processors (abbreviated by AHNEP). Moreover, AHENPs with seven nodes are sufficient to accept any
recursively enumerable language (see \cite{artiom09}), while AHNEPs with ten nodes are sufficient to simulate
efficiently a nondeterministic Turing machine (see \cite{actaLMM}).

In most of the papers in the area, AHNEPs with an underlying complete graph were considered. As far as universality
and computational and descriptional complexity results are concerned, it seems a natural decision to keep the shape
of the network the same, no matter what language the network accepts; otherwise the different characterizations that
one may obtain can be considered as not being uniform.
However, complete graphs have two disadvantages.
\begin{btlists*}
\item Complete graphs are very different from network structures which occur in practice where very often
stars, grids, rings, backbones etc. are used.
\item  A complete graph has many edges which correspond to communication channels. If one is interested
in a small amount of communication, one has to consider other structures for the underlying graphs.
\end{btlists*}

In this paper we consider AHNEPs where the underlying graph is a star or a grid or a ring. Regarding these structures
one can expect to get also networks with a small number of communication channels, since the number of edges in a
star is its number of nodes minus 1 and the maximal number of edges connected with a node is restricted by 4 in a
grid.

We show that each recursively enumerable language can be accepted efficiently by a star or grid-AHNEP. Furthermore, we
prove
that the number of processors used in these star networks is bounded by 13 and that of grid networks by 52.
As a consequence we get that each recursively enumerable language can be accepted by networks with at most 12 edges
and by networks where each node communicates with at most three other nodes. In the case of ring networks, we only
have a weaker result. Ring-AHNEPs can simulate $2$-tag systems, thus we deduce the existence of a universal ring-AHNEP.
Moreover, ring-AHNEPs are able to solve NP-complete problems in polynomial time.

\section{Basic definitions}
We start by summarizing the notions used throughout the paper; for all unexplained notions the reader is referred to
\cite{handbook,hartmanis2,Papa,rogo}.

An alphabet is a finite and nonempty set of symbols. The set
of all words over $V$ is denoted by $V^*$ and the empty word is denoted by $\eps$. The length of a word $x$ is
denoted by $|x|$ while $\alphp(x)$ denotes the minimal alphabet $W$ such that $x\in W^*$.
For a word $x\in W^*$, $x^r$ denotes the reversal of the word.

We say that a rule $a\ra b$, with $a,b\in V\cup\{\eps\}$ is a {\it substitution rule} if both $a$ and $b$ are not $\eps$; it is a {\it deletion rule}
if $a\ne\eps$ and $b=\eps$; it is an {\it insertion rule} if $a=\eps$ and $b\ne\eps$. The set of all substitution, deletion, and insertion rules over
an alphabet $V$ are denoted by $\Sub_V$, $\Del_V$, and $\Ins_V$, respectively.

Given a rule  $\s$ as above and a word $w\in V^*$, we define the following \emph{actions} of $\s$ on $w$:
{\begin{itemize}
\item If $\s\equiv a\ra
b\in \Sub_V$, then $\s^*(w)=\left\{
\begin{array}{ll}
\{ubv:\ \exists u,v\in V^*\ (w=uav)\},\\
\{w\mid w\mbox{ does not contain }a\},
\end{array}\right.$
\item If $\s\equiv a\to \eps\in \Del_V$, then $ \s^*(w)=\left\{
\begin{array}{ll}
\{uv:\ \exists u,v\in V^*\ (w=uav)\},\\
\{w\mid w\mbox{ does not contain }a\},
\end{array}\right.$\\
$\s^r(w)=\left\{
\begin{array}{ll}
\{u:\ w=ua\},\\
\{w\mid w\mbox{ contains no }a\},
\end{array}\right.$ \
$\s^l(w)=\left\{
\begin{array}{ll}
\{v:\ w=av\},\\
\{w\mid w\mbox{ contains no }a\},
\end{array}\right.$
\item If $\s\equiv \eps\to a\in \Ins_V$, then\\
$\s^*(w)=\{uav:\ \exists u,v\in V^*\ (w=uv)\},\ \s^r(w)=\{wa\},\ \s^l(w)=\{aw\}.$
\end{itemize}}
We say that $\alpha\in\{*,l,r\}$ expresses the way of applying a
deletion or insertion rule to a word, namely at any position
($\alpha=*$), in the left ($\alpha=l$), or in the right
($\alpha=r$) end of the word, respectively. For every rule $\s$,
action $\alpha\in \{*,l,r\}$, and $L\subseteq V^*$, we define the
\emph{$\alpha$-action of $\s$ on $L$} by
\[\s^\alpha(L)=\bigcup_{w\in L} \s^\alpha(w).\]
Given a
finite set of rules $M$, we define the \emph{$\alpha$-action of
$M$} on a word $w$ and a language $L$ by \\
\[M^{\alpha}(w)=\bigcup_{\s\in M} \s^{\alpha}(w)\
\mbox{ and } \ M^{\alpha}(L)=\bigcup_{w\in L}M^{\alpha}(w),\]
respectively. In what follows, we shall refer to the rewriting
operations defined above as {\it evolutionary operations} since
they may be viewed as linguistic formulations of local DNA
mutations.

For two disjoint subsets $P$ and $F$ of an alphabet $V$ and
a word $w$ over $V$, we define the predicates \\
\begin{eqnarray*}
\varphi^{(s)}(w;P,F)&\equiv& P\subseteq \alphp(w) \wedge F\cap \alphp(w)=\emptyset, \\
\varphi^{(w)}(w;P,F)&\equiv& (P=\emptyset \vee \alphp(w)\cap P \ne \emptyset) \wedge F\cap \alphp(w)=\emptyset .
\end{eqnarray*}

The construction of these predicates is based on {\it random-context conditions} defined by the two sets $P$
{\it permitting contexts/symbols} and $F$ of {\it forbidding contexts/symbols}. Informally, the first condition
requires that all permitting symbols are present in $w$ and no forbidding symbol is present in $w$,
while the second one is a weaker variant of the first, requiring that -- for an non-empty set $P$ -- at least one
permitting symbol appears in $w$ and no forbidding symbol is present in $w$.

For every language $L\subseteq V^*$ and $\beta\in \{(s),(w)\}$, we define:\\
$$\varphi^\beta(L,P,F)=\{w\in L\mid \varphi^\beta(w;P,F)\}.$$

An \emph{evolutionary processor over $V$} is a tuple $(M,\PI,\FI,\PO,\FO)$, where: \\
-- $(M\subseteq \Sub_V)$ or $(M\subseteq \Del_V)$ or $(M\subseteq \Ins_V)$ and \\
-- $\PI,\FI,\PO,\FO \subseteq V$. \\
$M$ is a set of substitution, deletion or insertion rules over the alphabet $V$. The set $M$ represents the set of
evolutionary rules of the processor. As one can see, a processor is ``specialized'' in only one evolutionary operation.
$\PI$ and $\FI$ are the {\it input} permitting/forbidding contexts of the processor, while $\PO$ and $\FO$ are the
{\it output} permitting/forbidding contexts of the processor. Informally, the permitting input/output contexts are
the set of symbols that should be present in a word, when it enters/leaves the processor, while the forbidding
contexts are the set of symbols that should not be present in a word in order to enter/leave the processor.

We denote the set of evolutionary processors over $V$ by $EP_V$. Note that the evolutionary processor described here is a mathematical concept
similar to that of an evolutionary algorithm, both being inspired from Darwinian evolution. The rewriting operations we have considered might be
interpreted as mutations and the filtering process described above might be viewed as a selection process. Recombination is missing but it was
asserted that evolutionary and functional relationships between genes can be captured by taking only local mutations into consideration \cite{s}.
Furthermore, we are not concerned here with a possible biological implementation of these processors, though this is a matter of great importance.

An \emph{accepting hybrid network of evolutionary processors} (AHNEP for short) is a $7$-tuple
$\Gamma=(V,$ $U,$ $G,$ ${\mathcal N},$ $\alpha,$ $\beta,$ $x_I,$ $x_O)$, where
\begin{btlists*}
\item $V$ and $U$ are the input and network alphabets, respectively, with $V\subseteq U$.
\item $G=(X_G,E_G)$ is an undirected graph, with the set of nodes $X_G$ and the set of edges $E_G$.
\item $\mathcal{N}:X_G\lra EP_U$ is a mapping which associates with each node $x\in X_G$ an evolutionary
processor $\mathcal{N}(x)=(M_x,\PI_x,\FI_x,\PO_x,\FO_x)$.
\item $\alpha: X_G\lra \{*,l,r\}$; $\alpha(x)$ gives the action mode of the rules of node $x$
on the words existing in that node.
\item $\beta: X_G\lra \{(s),(w)\}$ defines the type of the {\it input/output filters} of a node.
\item $x_I$ and $x_O\in
X_G$ is the {\it input node}, and the {\it output node}, respectively, of the AHNEP.
\end{btlists*}
$G$ is called the {\it underlying graph} of the network. We denote an undirected edge between nodes $x$ and $y$ by $(x,y)$.

For every node $x\in X_G$, the following filters are defined:
\begin{eqnarray*}
\mbox{input filter: }&&  \rho_x(\cdot)=\varphi^{\beta(x)}(\cdot;\PI_x,\FI_x), \\
\mbox{output filter: } &&  \tau_x(\cdot)=\varphi^{\beta(x)}(\cdot;\PO_x,\FO_x).
\end{eqnarray*}
That is, $\rho_x(w)$ (resp. $\tau_x$) indicates whether or not the word $w$ can pass the input (resp. output) filter
of $x$. More generally, $\rho_x(L)$ (resp. $\tau_x(L)$) is the set of words of $L$ that can pass the input
(resp. output) filter of $x$.

We say that the number of nodes present in the AHNEP $\Gamma $ is the size of $\Gamma$.

A \emph{configuration} of an AHNEP $\Gamma$ as above is a mapping $C: X_G\lra 2^{V^*}$ which associates a set of
words with every node of the graph. A configuration may be understood as the sets of words which are present in any
node at a given moment. A configuration can change either by an evolutionary step or by a communication step.

When changing by an evolutionary step each component $C(x)$ of the configuration $C$ is changed in accordance with
the set of evolutionary rules $M_x$ associated with the node $x$ and the way $\alpha(x)$ of applying these rules.
Formally, we say that the configuration $C'$ is obtained in an \emph{evolutionary step} from the configuration $C$,
written as $C\Lra C'$, if and only if $C'(x)=M^{\alpha(x)}_x(C(x))$ for all $x\in X_G$.

When changing by a communication step, each node processor $x\in X_G$ sends one copy of each word it has, which is
able to pass the output filter of $x$, to all the node processors connected to $x$ and receives all the words sent by
any node processor connected with $x$ providing that they can pass its input filter. Formally, we say that the
configuration $C'$ is obtained in a \emph{communication step} from configuration $C$, written as
$C\vdash C'$, if and only if
$$C'(x)=(C(x)-\tau_x(C(x)))\ \cup \bigcup_{(x,y)\in E_G} (\tau_y(C(y))\cap \rho_x(C(y)))$$
for all $x\in X_G$. Note that words which leave a node $x$ are eliminated from that node; if they cannot pass the
input filter of any node connected with $x$, they are lost.

Let $\Gamma$ be an AHNEP, the computation  of $\Gamma$ on the input word $w\in V^*$ is a sequence of configurations
$C_0^{(w)},C_1^{(w)},C_2^{(w)},\dots$, where the initial configuration $C_0^{(w)}$ of $\Gamma$ is defined by
$C_0^{(w)}(x_I)=\{w\}$ and $C_0^{(w)}(x)=\emptyset$ for all $x\in X_G$, $x\ne x_I$, and
$C_{2i}^{(w)}\Lra C_{2i+1}^{(w)}$ and $C_{2i+1}^{(w)}\vdash C_{2i+2}^{(w)}$ holds for all $i\geq 0$. By the previous
definitions, each configuration $C_i^{(w)}$ is uniquely determined by the configuration $C_{i-1}^{(w)}$. Thus each
computation in an AHNEP is deterministic. A computation as above immediately halts if one of the following two
conditions holds:\\
\phantom{i}(i) There exists a configuration in which the set of words existing in the output node $x_O$ is non-empty.
In this case, the computation is said to be an {\it accepting computation}.\\
(ii) There exist two identical configurations obtained either in consecutive evolutionary steps or in consecutive
communication steps. If this condition is fulfilled, but condition (i) is not fulfilled, then we say that the computation is a {\it rejecting computation}.

In the aforementioned cases the computation is said to be finite.

The {\it language accepted} by $\Gamma$ is
$$L_a(\Gamma)=\{w\in V^*\mid \mbox{ the computation of $\Gamma$ on $w$ is an accepting one}\}.$$

We say that an AHNEP $\Gamma$ decides the language $L\subseteq V^*$, and write $L(\Gamma)=L$ if and only if
$L_a(\Gamma)=L$ and the computation of $\Gamma$ halts on every $x\in V^*$.

Let $\Gamma$ be an AHNEP deciding the language $L$. The {\it time complexity} of the finite computation $C_0^{(x)}$, $C_1^{(x)}$, $C_2^{(x)}$, $\dots
C_m^{(x)}$ of $\Gamma$ on $x\in L$  is denoted by $Time_{\Gamma}(x)$ and equals $m$.
The time complexity of $\Gamma$ is the function from {\bf N} to {\bf N},\\
\centerline{$Time_{\Gamma}(n)=\mbox{sup}\{Time_{\Gamma}(x)\mid |x|=n\}.$}

In this paper we consider networks of evolutionary processors with special underlying graphs.

The first network topology we consider will be called {\it star}. An $n$-star, with $n\geq 1$, is a graph having
$n+1$ vertices, say $x_1,x_2,\ldots,x_n$ and $C$, and the edges $(x_i,C)$ for $1\leq i\leq n$. One can easily imagine
such a network as a center node $C$ connected to $n$ other nodes that are placed around it.
An AHNEP having as underlying graph a $n$-star for some $n\geq 1$, will be called $\star$AHNEP (or star-AHNEP).

The second network topology we consider will be called {\it ring}. An $n$-ring, with $n\geq 2$, is a graph having
$n+1$ nodes, say $x_1,x_2,\ldots,x_n$ (called the ring-nodes) and $x_0$ (called the center-node), and the edges
$(x_i,x_{(i \mbox{ mod }n) +1})$ and $(x_i,x_0)$ for $1\leq i\leq n$. One can easily imagine such a network as
$n$ nodes disposed on a ring, each two consecutive nodes being connected, with another node right in the middle of
the ring, connected to all the others (however, this node will have a different use than the central node in
$\star$AHNEPs). An AHNEP having as underlying graph such a ring and verifying the property that the central node $x_0$
of the graph is always the output node of the network will be called a $\circ$AHNEP (or ring-AHNEP). In this way, the
computation is done by the ring-nodes, and it stops only when a string enters the center-node (who does not have any
other effect on the computation). It is worth to remark that the topology we use in this case is called ring because
the nodes that perform the computation form a ring; putting the output node on the same ring would lead to the
disconnection of this ring and its transformation into a chain (because the output node cannot be of any use in the
computation, and, moreover, once a string enters in this node, the whole process stops).

The third network topology we consider will be called a grid. An $(m,n)$-grid is a graph consisting of $m\cdot n$
nodes $(i,j)$, $1\leq i\leq m$ and $1\leq j\leq n$, and having the edges
$((i,j),(i+1,j))$ and $((i,j),(i,j+1))$ for $1\leq i\leq m-1$ and $1\leq j\leq n-1$,
$((m,j),(m,j+1))$ for $1\leq j\leq n-1$ and $((i,n),(i+1,n))$ for $1\leq i\leq m-1$. Obviously, an
$(m,n)$-grid is isomorphic to the finite subgrid of the grid $\mathbb{Z}\times \mathbb{Z}$ of height $n-1$ and width
$m-1$. An AHNEP having  as an underlying graph an $(m,n)$-grid for some $n\geq 1$ and $m\geq 1$ will be called a
grid-AHNEP.

Ring and star networks were also approached in \cite{juan_acta} but in the context of generating NEPs with regular
filters. However, the results and proofs presented in that paper cannot be applied to get similar results for AHNEPs
with random context filters.

\section{Star-AHNEPs}
We start with a further characterization of the class of recursively enumerable (recursive) languages by networks where the underlying graphs are
complete graphs with loops.
\begin{theorem}\label{th-loops}
For any recursively enumerable (recursive) language $L$ accepted (decided) by a Turing machine there exists a
AHNEP $\Gamma$ of size $12$ such that the underlying graph is a complete graph which additionally has a loop for each
node, and it accepts (decides) $L$. \\
Moreover, if $L\in \NTIME(f(n))$, then $Time_\Gamma(n)\in {\cal O}(f(n))$ (where the constant hidden by the $\bigo$ notation depends on the Turing machine accepting, respectively deciding, $L$).
\end{theorem}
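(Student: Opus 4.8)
The plan is to start from a (nondeterministic) Turing machine $M$ with $L_a(M)=L$, taken without loss of generality in a convenient single-tape normal form, and to build $\Gamma$ whose words encode configurations of $M$ over an enlarged working alphabet $U\supseteq V$. A configuration is represented as a word carrying the tape contents together with a single distinguished \emph{head symbol} that jointly encodes the current state and the scanned cell; the essential invariant is that at every stage exactly one symbol from a dedicated ``special'' sub-alphabet occurs in each word. Since the underlying graph is complete and has a loop at every node, any node may send a word to any node and to itself, so routing is controlled \emph{entirely} by the input/output random-context filters; this maximal freedom is what lets us pack the transition rules of $M$ into a fixed set of processors and keep the size at $12$. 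Recalling that a processor is specialized in one operation type, the $12$ nodes are partitioned into a block of substitution nodes, a block of insertion nodes, a block of deletion nodes, plus the input node $x_I$ and the output node $x_O$; all transitions of $M$ reuse these same processors, the applicable transition being selected by the permitting/forbidding contexts rather than by dedicated hardware per rule.

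Next I would describe the simulation of a single step of $M$. A substitution rewrites the unique head symbol, thereby recording the new state and the symbol to be written and producing a \emph{unique temporary marker}; because this marker is the only special symbol present, a $*$-action substitution on it behaves deterministically despite acting ``anywhere''. The head is then moved by a short fixed sequence of rounds: a nondeterministic $*$-insertion introduces candidate new-head markers at all positions, and the strong/weak filters $\varphi^{(s)},\varphi^{(w)}$, tuned against the temporary marker, allow only the correctly placed variant to pass to the next node while the ill-placed copies are trapped or discarded; a deletion node then erases the temporary markers, so the word cleanly encodes the successor configuration. A loop at a node is used where a bounded but input-dependent repetition is needed, for instance when converting the raw input $w$ received at $x_I$ into the initial configuration of $M$ on $w$. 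Nondeterminism of $M$ requires no extra machinery: a node may hold several words at once, one per branch, and the branches evolve in parallel.

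For acceptance, size, and time I would argue as follows. Whenever a word representing an accepting configuration of $M$ appears, its filters route it to $x_O$, whose rule set is empty; by halting condition (i), nonemptiness of $x_O$ is an accepting computation, giving $L_a(\Gamma)=L$. In the recursive case one must additionally guarantee termination on every input: halting non-accepting configurations and all dead-end words must be eliminated so that two consecutive identical configurations eventually arise, triggering the rejecting halt via condition (ii). The size is exactly $12$ because every transition reuses the same specialized processors; for the time bound, the preprocessing of $w$ costs $\bigo(|w|)$ rounds and each of the at most $f(n)$ simulated steps of $M$ costs a constant number of evolutionary and communication rounds, independent of the tape length, so $Time_\Gamma(n)\in\bigo(f(n))$ as claimed.

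The hard part will be the head movement, i.e.\ manufacturing context-sensitive behaviour out of the context-free evolutionary operations: substitution, insertion, and deletion do not by themselves respect adjacency, yet a Turing step must act on the cell next to the head. The resolution is the single-special-symbol invariant, which makes $*$-substitution effectively deterministic, together with nondeterministic insertion plus carefully engineered filters to realize in constantly many rounds the one genuinely adjacency-dependent action, namely installing the new head marker beside the old one. The truly delicate design obligation is the filter bookkeeping: the permitting and forbidding sets must be chosen so that \emph{no} spuriously routed or ill-positioned word can ever reach $x_O$ (soundness) while \emph{every} legitimate successor configuration does (completeness), and so that dead words cannot keep the computation alive (halting in the recursive case). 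Verifying these three filter invariants simultaneously across the shared $12$-node architecture is where the bulk of the technical work lies.
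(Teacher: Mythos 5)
Your sketch founders on exactly the point you yourself identify as ``the hard part'', and the mechanism you propose for it cannot work. You want to realize head movement by a $*$-insertion that creates candidate new-head markers at all positions, with the filters then letting ``only the correctly placed variant'' pass. But the filters of an AHNEP are random-context predicates: $\varphi^{(s)}(w;P,F)$ and $\varphi^{(w)}(w;P,F)$ depend on $w$ only through $\alphp(w)$. Two words obtained by inserting the same marker at the correct position and at a wrong position have \emph{identical} alphabets, so no combination of input/output filters, strong or weak, on any node, can separate them. The ill-placed copies are therefore neither trapped nor discarded: each one travels through exactly the same sequence of nodes as the well-placed copy, and each encodes a configuration in which the head has effectively teleported to an arbitrary tape cell. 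These spurious configurations keep evolving like legitimate ones and can reach $x_O$, so soundness ($L_a(\Gamma)\subseteq L$) fails; the same defect undermines the halting argument needed for the recursive case. This is a genuine gap, not a detail to be engineered away, because it asks the filters to carry positional information they provably cannot see.

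For comparison: the paper gives no detailed proof (it explicitly defers to the construction of Loos, Manea and Mitrana in \cite{actaLMM}), but in that construction positional control comes from a different source, namely the $l$/$r$ action modes rather than the filters. Insertions and deletions performed at the ends of the word are position-deterministic, and an end-deletion $a\ra\eps$ applied to a word that contains $a$ but does not have $a$ at the required end produces the empty set, i.e.\ destroys the word; this is the mechanism that kills wrong guesses. Coupled with an encoding that keeps the state symbol and the scanned cell at the ends of the word (the tape being folded/rotated inside the word), every move of the Turing machine manipulates only the ends, which is what makes each simulated step cost $\bigo(1)$ network steps and yields $Time_\Gamma(n)\in\bigo(f(n))$. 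Note also that the loops are not, as you suggest, a convenience for input preprocessing: the reason Theorem \ref{th-loops} insists on a complete graph \emph{with loops} is that the star construction of Theorem \ref{starcompl} routes every communication through the center, so a word sent out by a node can come back to that very node; the 12-node construction must be robust against this self-communication, a requirement your sketch never addresses.
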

By limitedness of space we omit the proof. Note, however, that it follows mainly the proof of the main result in \cite{actaLMM}.

The main result that we show in this section is the following theorem.
\begin{theorem}\label{starcompl}
For any recursively enumerable (recursive) language $L$ accepted (decided) by a Turing machine there exists a
$\star$AHNEP $\Gamma$ of size $13$ accepting (deciding) $L$. Moreover, if $L\in \NTIME(f(n))$, then
$Time_\Gamma(n)\in {\cal O}(f(n))$ (where the constant hidden by the $\bigo$ notation depends on the Turing machine accepting, respectively deciding, $L$).
\end{theorem}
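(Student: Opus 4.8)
The plan is to build the star-AHNEP $\Gamma$ directly on top of the size-$12$ complete-with-loops AHNEP $\Gamma'$ granted by Theorem~\ref{th-loops}. I take the twelve processors of $\Gamma'$ as the twelve leaves $x_1,\dots,x_{12}$ of a $12$-star and add one fresh node $C$ as its centre, for total size $13$; the input (output) node of $\Gamma$ is the leaf matching the input (output) node of $\Gamma'$. Each leaf keeps the rules, action mode, filter type and input filter of its counterpart, so that $\rho_{x_i}=\rho_i$. The centre $C$ is made a \emph{transparent relay}: its filters are the weak filters with empty permitting and forbidding sets, so every word enters and leaves $C$, and its rule set is the single substitution $Z\to Z$ on a symbol $Z\in U$ that occurs in no reachable word, whose action is therefore the identity on every word that ever reaches it.

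The simulation runs one evolutionary-plus-communication round of $\Gamma'$ by two consecutive rounds (four steps) of $\Gamma$, maintaining the invariant that at the start of each block $C$ is empty and each leaf $x_i$ holds exactly the words held by node $i$ of $\Gamma'$. In the first evolutionary step the leaves apply their rules while $C$ (empty) stays empty; in the first communication step every leaf ships its words to $C$ and, since $C$ is empty, receives nothing back, so afterwards $C$ holds the union $S$ of all processed words and every leaf is empty. The second evolutionary step is then harmless at the leaves (they act on $\emptyset$) and leaves $S$ untouched at $C$; the second communication step broadcasts $S$ from $C$ to every leaf, and leaf $x_j$ retains exactly $S\cap\rho_j$. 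Because $C$ broadcasts to \emph{every} leaf, including the word's source, and acceptance is decided solely by the original input filters, this reproduces the all-to-all communication of the complete graph together with the self-loops; the invariant is restored with leaf contents equal to the contents of $\Gamma'$ after one round.

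The step on which the whole construction turns is guaranteeing that each word is subjected to a leaf's evolutionary rules \emph{exactly once} per simulated round: with two evolutionary steps per block, any word lingering in a leaf would be rewritten twice. I avoid this by forcing every word to leave its leaf during the first communication step, i.e.\ by taking the output filter of each leaf to accept everything, so that the leaf is genuinely empty when the spurious second evolutionary step fires. This is correct precisely because the received set in $\Gamma'$, namely $\bigcup_i\bigl(\tau_i(M_i^{\alpha_i}(W_i))\cap\rho_j\bigr)$, coincides with the star's set $S\cap\rho_j=\bigl(\bigcup_i M_i^{\alpha_i}(W_i)\bigr)\cap\rho_j$ as soon as no output filter ever blocks a word produced in its node. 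The main obstacle is thus to secure this non-blocking property; I obtain it from the particular construction behind Theorem~\ref{th-loops} (which follows \cite{actaLMM}), where the control of the computation is carried entirely by the input filters, so that every node ships out everything it produces and the output-filtering can be discarded without changing the accepted language.

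Finally I check that $\Gamma$ decides (accepts) the same language and respects the time bound. Acceptance transfers verbatim, since the output leaf of $\Gamma$ becomes non-empty exactly when the output node of $\Gamma'$ does; halting by the repeated-configuration condition is preserved because the computation of $\Gamma$ is a faithful, deterministic, four-fold time-scaling of that of $\Gamma'$, hence eventually periodic precisely when the latter is, so a rejecting loop of $\Gamma'$ forces $\Gamma$ to halt as a rejecting computation as well. As each round of $\Gamma'$ costs four steps of $\Gamma$, a computation of length $\bigo(f(n))$ becomes one of length $\bigo(f(n))$, giving $Time_\Gamma(n)\in\bigo(f(n))$ whenever $L\in\NTIME(f(n))$, as required.
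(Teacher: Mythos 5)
Your construction coincides, at its core, with the paper's own proof: the twelve nodes of the complete-with-loops network of Theorem~\ref{th-loops} become the leaves of a star around a transparent hub $C$, and one round of $\Gamma'$ is simulated by two rounds (four steps) of the star, giving size $13$ and the $\bigo(f(n))$ bound. One detail is even tidier in your version: giving the hub the dummy substitution $Z\to Z$ instead of the paper's hub ``with no rules and filters allowing any word to pass'' avoids a real technicality, since under the paper's definition $M^{\alpha}(w)=\bigcup_{\sigma\in M}\sigma^{\alpha}(w)$ an empty rule set would erase the hub's contents at every evolutionary step.

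The genuine weak point is the step you yourself call the crux: replacing every leaf's output filter by one that accepts everything, justified by the claim that in the construction behind Theorem~\ref{th-loops} no output filter ever blocks a word produced in its node. The paper never exhibits that construction (its proof is omitted), so your argument is conditional on a property you cannot verify; and if the property fails, the modification is not innocent --- words that $\Gamma'$ would retain inside a node (to be rewritten there again) now escape through $C$ to other leaves, so your network could accept strictly more than $L$. Moreover, the worry that forced this modification is unfounded: words in an AHNEP never interact, so acceptance is a per-word reachability property, and a word retained by its output filter undergoes exactly the same sequence of rule applications at the same leaf in the star as in $\Gamma'$; only its schedule shifts by half a simulated round. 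Hence keeping the original filters, as the paper does, already yields $L_a(\Gamma)=L_a(\Gamma')$ without any assumption on Theorem~\ref{th-loops}. Separately, your halting argument for the ``deciding'' half is not sound as stated: eventual periodicity is not the rejection criterion; condition (ii) requires two identical configurations produced by consecutive evolutionary (or consecutive communication) steps, and a non-empty rejecting loop of $\Gamma'$ becomes a period-four cycle of the star in which consecutive same-type configurations are never equal (the words sit alternately in the leaves and in $C$), so the star would not halt. The paper is admittedly just as terse on this point, but your periodicity claim does not close the gap; what one actually needs is that the network of Theorem~\ref{th-loops} rejects only by losing all words, so that both computations reach the all-empty configuration and halt.
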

\begin{proof}
The proof is based on the following simple remark. Any AHNEP having as underlying graph a complete graph with loops (i.e., there is an edge between
every two nodes, and an edge connecting each node to itself) can be canonically transformed into a $\star$AHNEP. Assume that the initial AHNEP has
$n$ nodes. In the new $\star$AHNEP we will have all the nodes of the initial network (with the same set of rules and filters), and a new one, called
here $C$ (with no rules and filters allowing any word to pass the filters). Every node of the former network is connected to $C$, and no other edges exist. It is not hard to see that
the new network accepts (decides) exactly the same language that is accepted (decided) by the initial one. Indeed, the evolution steps are performed
by the same nodes in both networks, but the communication that takes place between two nodes of the initial network is done here via the $C$ node
(that does not modify in any way the string). In conclusion, we have a way to transform AHENPs that have as underlying network a complete graph with
loops into $\star$AHNEPs, by adding only one node. Moreover, the $\star$AHENP that we obtain simulates efficiently a computation of the initial
AHENP: it makes at most two times the number of steps as in the initial computation of the complete AHNEP with loops.

Now the result follows immediately from Theorem \ref{th-loops}.
\end{proof}

Using the same idea we can prove the following statement.
\begin{theorem}\label{2tag}
For every $2$-tag system $T=(V,\phi)$ there exists a $\star$AHNEP $\Gamma$ of size $7$ such that $L(\Gamma)=\{w\mid \mbox{ $T$ halts on $w$}\}$.
\qed
\end{theorem}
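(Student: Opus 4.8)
The plan is to proceed exactly as in the proof of Theorem \ref{starcompl}: the reusable ``same idea'' is the canonical transformation that turns any AHNEP whose underlying graph is a complete graph with a loop at every node into a $\star$AHNEP by adjoining a single inert centre node $C$ (carrying no rules and with filters that let every word pass), which routes every communication of the original network while never altering a word. Thus I would first produce a complete-graph-with-loops AHNEP $\Gamma_0$ of size $6$ that simulates the given $2$-tag system $T=(V,\phi)$, accepting an input $w$ precisely when $T$ halts on $w$, and then apply this transformation to obtain a $\star$AHNEP $\Gamma$ of size $7$ with the same accepted language. Since the added node $C$ does not interfere with the computation, the equality $L(\Gamma)=\{w\mid T\text{ halts on }w\}$ follows from the corresponding property of $\Gamma_0$ exactly as Theorem \ref{starcompl} followed from Theorem \ref{th-loops}.

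The real content is therefore the construction of $\Gamma_0$, a network of six specialized processors that faithfully reproduces the deterministic computation of $T$. I would keep the current tag configuration as a single word over an enlarged alphabet (with an end-marker delimiting its right end) and realize one tag step, which rewrites $a_1 a_2 v$ into $v\,\phi(a_1)$, as a short pipeline: a substitution processor reads the leading symbol $a_1$ and replaces it by a marked copy that both flags the left end and records, through finitely many auxiliary symbols, which word $\phi(a_1)$ is to be appended; a right-insertion processor appends the letters of $\phi(a_1)$ one at a time at the right end, the ``progress'' markers being advanced by a substitution processor so that the letters appear in the correct order; and a left-deletion processor removes the two leading symbols once the appending is complete. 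The random-context filters $(\PI,\FI,\PO,\FO)$ do all the routing, admitting a word to a processor only when the expected marker is present and releasing it only after that marker has been updated, so that each word traverses the processors in exactly one legal order and the spurious branches created by the set-valued semantics are killed by the filters. Because $V$ and $\phi$ are fixed, every case is encoded in the (finitely many) rules of a bounded number of processors, so the node count stays at six independently of $|V|$ and of $\max_a|\phi(a)|$; a final mechanism, detecting the halting condition $|w|<2$ via the end-marker, forwards the word to the output node.

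The step I expect to be the main obstacle is precisely this simulation of a single tag step within six nodes while preserving faithfulness. The symbol $a_1$ is read at the \emph{left} end but the string $\phi(a_1)$ must be produced at the \emph{right} end, so the information about $a_1$ has to be transported along the word, and the several insertions needed for a long $\phi(a_1)$ must be sequenced and kept deterministic so that no illegitimate word survives the filters and reaches the output. Organizing this transport-and-sequencing with only six specialized processors, in the spirit of the known compact simulations of tag systems by networks of evolutionary processors and arranged so that the complete-graph-with-loops topology suffices, is where the careful bookkeeping lies; detecting the halting condition and adding the centre node $C$ are then routine.
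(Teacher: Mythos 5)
Your proposal follows essentially the same route as the paper: the paper's proof of Theorem \ref{2tag} consists precisely of the remark that the centre-node transformation from Theorem \ref{starcompl} applies, reducing the task to building a size-$6$ complete-graph-with-loops AHNEP that simulates the given $2$-tag system (a construction the paper leaves implicit, as it adapts known simulations of tag systems by evolutionary networks). Your sketch of that $6$-node simulation --- marking the leading symbol, right-insertion of $\phi(a_1)$ sequenced by progress markers, left-deletion of the two leading symbols, and random-context filters killing the spurious branches created by the $*$-action --- is exactly in the spirit of the constructions the paper relies on, so the approaches coincide.
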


We stress out that this result does not improve, neither from the descriptive complexity point of view nor from the
computational complexity point of view, the result in Theorem \ref{starcompl}. Indeed, not every recursively
enumerable language can be accepted by $2$-tag systems, thus Theorem \ref{2tag} does not imply that $\star$AHNEPs
with $7$ nodes are complete; moreover, although $2$-tag systems efficiently simulate deterministic Turing machines,
via cyclic tag systems (see, e.g., \cite{woods}), the previous result does not allow us to infer a bound on the size
of the networks accepting in a computationally efficient way all recursively enumerable languages.

However, from Theorem \ref{2tag}, and the fact that $2$-tag systems are universal, we derive the following fact.
\begin{corollary} There exists a universal $\star$AHNEP with $7$ nodes. \qed
\end{corollary}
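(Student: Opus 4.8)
The plan is to derive the corollary by instantiating Theorem~\ref{2tag} on a single, fixed universal $2$-tag system. A \emph{universal} $\star$AHNEP should be understood as one fixed network $\Gamma_U$ together with an effectively computable encoding that maps any pair $(M,x)$, consisting of a Turing machine $M$ and an input word $x$, to an input word for $\Gamma_U$, in such a way that $\Gamma_U$ accepts that encoding if and only if $M$ halts on $x$. To produce such a $\Gamma_U$ it suffices to exhibit a $2$-tag system whose halting problem already captures the halting problem of an arbitrary Turing machine, and then feed it into Theorem~\ref{2tag}.

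First I would fix a universal Turing machine $U$. Since $2$-tag systems form a universal computational model---every Turing machine can be simulated by a $2$-tag system under a standard, effectively computable encoding of configurations---there is a $2$-tag system $T_U=(V,\phi)$ that simulates $U$, halting on an encoded input exactly when $U$ halts on the corresponding word. Thus $T_U$ is itself a universal $2$-tag system: the encoding chain leading from a pair $(M,x)$ through $U$ to $T_U$ is effective, and the halting behaviour of $T_U$ on the resulting word reflects the halting behaviour of $M$ on $x$.

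Applying Theorem~\ref{2tag} to this particular $T_U$ yields a $\star$AHNEP $\Gamma$ of size $7$ with $L(\Gamma)=\{w\mid T_U\text{ halts on }w\}$. Composing the encoding of $(M,x)$ as an input for $U$ with the simulation encoding of that word as an input for $T_U$ then gives an effective map under which $\Gamma$ accepts if and only if $M$ halts on $x$; hence $\Gamma$ is the desired universal network with $7$ nodes. The only point requiring care---and the main obstacle, such as it is---is to confirm that halting is preserved along each link of the encoding chain, so that the equivalence ``$\Gamma$ accepts $\;\Leftrightarrow\;M$ halts'' is exact; but this is immediate from the fact that Theorem~\ref{2tag} equates $L(\Gamma)$ with the halting set of $T_U$ exactly, rather than merely up to a further reduction.
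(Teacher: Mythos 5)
Your proposal is correct and matches the paper's own reasoning exactly: the paper derives the corollary directly from Theorem~\ref{2tag} together with the universality of $2$-tag systems (their ability to simulate arbitrary Turing machines under effective encodings), which is precisely the instantiation-plus-encoding-composition argument you spell out. The extra care you take in tracking that halting is preserved along the encoding chain is a legitimate elaboration of what the paper leaves implicit, not a departure from its route.
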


In this respect we discuss a bit on the importance of the descriptive complexity bounds shown here: we know that
$\star$AHNEPs with $13$ nodes accept efficiently recursive enumerable languages, and, as a simple consequence, we
have that NP equals the class of languages accepted in polynomial time by $\star$AHNEPs with $13$~nodes.
Why is such a result interesting, since, after all it is a rather clear fact that AHNEPs simulate the nondeterminism
(of Turing machines, for instance) by working with strings with multiplicities? Our answer to this question would
be that the result is interesting because it shows that a special class of AHNEPs (with a constant number of nodes, and
a special topology) has these good computational properties; it is not clear if AHNEPs with less nodes have the same
properties (we were not able neither to get a lower bound neither to show an equivalent of Theorem \ref{starcompl}
for smaller AHNEPs).

If we are interested in a small number of communication channels in the net, then we have the following result which
follows immediately from Theorem \ref{starcompl}.

\begin{corollary}
For any recursively enumerable language $L$, there is an AHNEP $\Gamma$ with at most 12 edges in the underlying graph
such that $L=L_a(\Gamma)$. \qed
\end{corollary}

\section{Grid-AHNEPS}

In this section we prove that grid-AHNEPS have the same power as $\star$AHNEPs. More precisely, we show the following
the following statements.

\begin{theorem}\label{th-grid}
Let $\Gamma =(V,U,G,{\mathcal N},\alpha,\beta,x_i,x_o)$ be a network of size $n$ where the underlying graph $G=(X_G,E_G)$ is a complete graph with
loops. Then there is a $(4,n+1)$-grid-AHNEP $\Gamma'$ such that $L_a(\Gamma')=L_a(\Gamma)$.
\end{theorem}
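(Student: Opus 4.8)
The plan is to simulate an $n$-node complete-graph-with-loops AHNEP $\Gamma$ on a $(4,n+1)$-grid by assigning to each node of $\Gamma$ one ``column'' of the grid, and using the four rows of that column to implement the evolutionary operation of the corresponding node of $\Gamma$ together with the routing needed to move strings to the right column for the next simulated step. The central difficulty in passing from a star to a grid is that a grid is only locally connected: a node can communicate only with its (at most four) geometric neighbours, whereas in $\Gamma$ every node can send a string to every other node in a single communication step. So the essence of the proof is to build, out of the grid edges, a ``communication bus'' through which any string leaving a simulated node can reach any other simulated node, while making sure that this routing never corrupts the string or lets it be processed by a wrong node.

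\smallskip
Concretely, I would index the columns $1,\dots,n+1$ and reserve, say, the last column (or the top row) as a distributor/collector backbone that is adjacent to every column. Each column $j$ (for $1\le j\le n$) would hold, in its four cells, the evolutionary processor of the $j$-th node of $\Gamma$ in one designated cell, plus three auxiliary cells whose rules are identity-like (or empty) and whose filters serve only to shuttle a string into the active cell, out of it, and onto the backbone. To encode ``where a string must go next'' I would augment the network alphabet $U$ with a small set of routing symbols (address markers) that tag a string with its destination column; the filters $\PI,\FI,\PO,\FO$ of the backbone and auxiliary cells are then set so that a string carrying the marker for column $k$ can only enter column $k$ and is blocked everywhere else. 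Crucially, because the markers are auxiliary symbols not in the input alphabet $V$, they do not interfere with the acceptance condition, and they can be stripped by a deletion cell before the string is handed to the active processor. This is the standard ``simulate a complete graph by a bus with addressing'' trick, and it is what the theorem's claim of a fixed height $4$ is really buying us: four rows are exactly enough to (i) receive from the bus, (ii) strip the incoming address, (iii) apply the node's rule, and (iv) attach the outgoing address and hand back to the bus.

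\smallskip
The key steps, in order, are therefore: first, fix the column assignment and the backbone, and extend $U$ with the address alphabet; second, specify for each column the four cells' rules, action modes $\alpha$, filter types $\beta$, and the permitting/forbidding sets so that a correctly-addressed string performs exactly one $\Gamma$-evolution step and re-enters the bus re-addressed, while a mis-addressed string is rejected by the input filter and hence lost; third, set the grid input node $x_i$ and output node $x_o$ to the cells in the columns corresponding to $\Gamma$'s input and output nodes (with the output column simply collecting any string that reaches it); and fourth, prove the simulation invariant, namely that after a bounded number of grid steps the multiset of strings present (modulo the auxiliary address symbols) in the active cells equals the configuration of $\Gamma$ after one of its steps. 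From this invariant the language equality $L_a(\Gamma')=L_a(\Gamma)$ follows: a string reaches $x_o$ in $\Gamma'$ iff it reaches the output node in $\Gamma$.

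\smallskip
The main obstacle I anticipate is the \emph{filter bookkeeping on the backbone}. In $\Gamma$ a communication step delivers, in parallel, every output string to every node; on the grid, routing all these strings through a shared bus means every string transiently visits cells it must not be processed by, and the random-context filters must simultaneously (a) let each string advance along the bus toward its destination, (b) prevent any string from leaking into the wrong active column, and (c) avoid deadlock or unbounded looping that would spuriously trigger the rejecting halting condition (ii) about two identical consecutive configurations. Getting the permitting/forbidding sets to do all three at once, with only four rows and with the weak/strong filter choice $\beta$, is the delicate part; I would handle it by making the address markers pairwise ``orthogonal'' (one dedicated forbidding symbol per column, so that column $k$'s input filter forbids every address symbol except its own) and by synchronising the bus traversal so that the timing of arrivals is uniform across columns, which keeps the step count a constant multiple of $\Gamma$'s step count and hence preserves the efficiency bound inherited from Theorem~\ref{th-loops}.
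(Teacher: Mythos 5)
Your overall strategy is the same family as the paper's: devote one grid column to each node of $\Gamma$, extend the alphabet with routing markers, and use grid rows as a bus so that the complete-graph communication is replaced by marker-controlled transit, with filters keeping strings in transit out of the wrong processors. However, your concrete design has a genuine gap, and it sits exactly at the step the whole construction must solve: the return path from a processor back to the bus. First, your parenthetical choice of ``the last column'' as the backbone cannot work at all: in a grid the last column is adjacent only to column $n$, not to every column, so a backbone must be a row. Second, once the backbone is a row (say row 1), your four-phase pipeline inside a column --- $(1,j)$ receive from the bus, $(2,j)$ strip the address, $(3,j)$ apply the rules of node $j$, $(4,j)$ attach the outgoing address and ``hand back to the bus'' --- is geometrically impossible: the only neighbours of $(4,j)$ are $(3,j)$, $(4,j-1)$ and $(4,j+1)$, so a string sitting in $(4,j)$ with a freshly attached destination address can reach row 1 only by climbing back through the processor cell $(3,j)$ and the strip cell $(2,j)$. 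By your own filter scheme (a processor forbids every address symbol except its own column's) such a string cannot enter $(3,j)$, and a word that cannot pass the input filter of any neighbour is lost; if instead you admit it, it is re-processed by $M_j$ and then has its address deleted by the strip cell, destroying the routing. Either way the simulation as described breaks down, so the claim that ``four rows are exactly enough'' for your pipeline is not substantiated.

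The paper resolves precisely this difficulty by abandoning the single bus and attach-at-departure addressing. It uses two one-directional buses: row 4 collects every word leaving a row-3 processor and carries it towards column 1, inserting dummy markers $Y$ on the way (which also keep it out of row 3, since every processor additionally forbids $Y$ and $X_0$ on input); row 1 redistributes words to the right. The two buses are joined by the dedicated first column $(4,1),(3,1),(2,1),(1,1)$, which contains no simulated processor and converts the $Y$-markers into a single counter symbol $X_0$ while cleaning the word --- this extra ``riser'' column is exactly why the grid has $n+1$ columns for $n$ processors. The destination is then not chosen at departure: along row 1 the counter is incremented by $X_k\ra X_{k+1}$, and at each column the alternative rule $X_k\ra X_k'$ lets a copy descend through the strip cell $(2,k+1)$ into processor $(3,k+1)$; since rule application is set-valued, all destinations are explored in parallel, which is what simulates the broadcast of the complete graph (and direct row-3 moves between adjacent processors are harmless because $G$ is complete). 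Note also that the paper never needs your synchronised-timing invariant, which is problematic anyway because transit times depend on the source--destination distance: since the theorem claims only $L_a(\Gamma')=L_a(\Gamma)$, it suffices to prove the two inclusions by a path-correspondence argument, mapping each routing excursion of the grid to one legal communication step of $\Gamma$ and, conversely, each accepting node sequence of $\Gamma$ to a concatenation of such excursions.
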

\begin{proof}
Let $\Gamma =(V,U,G,{\mathcal N},\alpha,\beta,x_i,x_o)$ be a network where the underlying graph
$G=(\{1,2,\dots ,n\},E_G)$ is a complete graph with loops $(x,x)$ for any node $x\in X_G$. Then we construct the
grid-AHNEP
\[\Gamma'= (V',U,G',{\mathcal N}',\alpha',\beta',(3,x_i+1),(3,x_o+1))\]
where $G'$ is the grid consisting of the nodes $(i,j)$ with $1\leq i\leq n+1$, $1\leq j\leq 5$,
\begin{eqnarray*}
&& V' = V\cup \bigcup_{i=0}^n \{ X_i,X_i'Y_i,Y_i'\} ,  \\
&&{\mathcal N}'(1,1)=(\{X_0\ra X_1\}, \{X_0\}, \emptyset, \emptyset, \emptyset), \ \alpha'(1,1)= *,\ \beta'(1,1)= (w),
 \\
&& {\mathcal N}'(1,i+1)=(\{X_i\ra X_{i+1}, X_i\ra X_i'\}, \{X_i\}, \emptyset, \emptyset, \emptyset),\
\alpha'(1,i+1)= *,  \\
&& \qquad \qquad \beta'(1,i+1)= (w) \mfor 1\leq i\leq n-1,  \\
&&{\mathcal N}'(1,n+1)=(X_n\ra X_n'\}, \{X_n\}, \emptyset, \emptyset, \emptyset), \ \alpha'(1,n+1)= *, \
\beta'(1,n+1)= (w),  \\
&&{\mathcal N}'(2,1)=(\{Y\ra \varepsilon\}, \{X_0\}, \emptyset, \emptyset, \{Y\}), \ \alpha'(2,1)= * ,
\beta'(2,1)= (w), \\
&&{\mathcal N}'(2,i+1)=(\{ X_i'\ra \varepsilon \}, \{X_i'\}, \emptyset, \emptyset, \emptyset), \ \alpha'(2,i+1)= *,
\beta'(2,i+1)= (w) \\
&& \qquad \qquad \mfor 1\leq i\leq n-1,  \\
&&{\mathcal N}'(3,1)=(\{Y' \ra X_0\}, \{Y'\}, \emptyset, \emptyset, \emptyset), \ \alpha'(3,1)= *,
\ \beta'(3,1)= (w), \\
&& {\mathcal N}'(3,i+1)=(M_i, \PI_i, \FI_i\cup \{ Y,X_0\}, \PO_i, \FO_i), \ \alpha'(3,i+1)= \alpha(i),  \\
&& \qquad \qquad \beta'(3,i+1)= \beta(i) \mfor 1\leq i\leq n,  \\
&&{\mathcal N}'(4,1)=(\{Y\ra Y'\}, \{Y\}, \emptyset, \emptyset, \emptyset),\ \alpha'(4,1)= *, \ \beta'(4,1)= (w) \smallskip \\
&&{\mathcal N}'(4,i+1)=(\{ \varepsilon \ra Y\}, \emptyset, \{Y'\}, \emptyset, \emptyset),
\ \alpha'(4,i+1)= *,  \\
&& \qquad \qquad \beta'(4,i+1)= (w) \mfor 1\leq i\leq n,  \\
&&{\mathcal N}'(4,n+1)=(\{ \varepsilon \ra Y_n'\}, \emptyset, \{Y_{n-1}'\}, \emptyset, \emptyset), \ \alpha'(4,n+1)= *, \
\beta'(4,n+1)= (w)
\end{eqnarray*}

We now consider the case that a word $z\in V^*$ is contained in the node $(3,i+1)$ for some $i$, $1\leq i\leq n$.
Note that in the initial configuration the word $w$ is in $(3,x_i+1)$. Then we apply an evolutionary rule of $M_i$
and obtain a word $z'\in V^*$. This can be communicated to node $(3,i)$ and $(3,i+2)$ (if $i\geq 2$ and $i\leq n-1$,
respectively), if the corresponding filters allow the communication. Since $G$ is a complete graph, these
communications can be done in $\Gamma$, too. Furthermore, $z'$ can be communicated to $(4,i+1)$. The application of
$\varepsilon\ra Y$ gives a word $z_1'Yz_2'$ with $z'=z_1'z_2'$. Now the word can only move in the fourth row of nodes,
i.\,e., only to nodes $(4,j)$ with $1\leq j\leq n+1$. As long as nodes $(4,j)$ with $2\leq j\leq n+1$ are entered,
a further $Y$ is inserted somewhere in the word. If $(4,1)$ is entered, then one $Y$ is changed to $Y'$ and the word
has to move to node $(3,1)$ where $Y'$ is replaced by $X_0$. Now the word has to move to node $(2,1)$ where the
remaining $Y$s in the word are deleted. The obtained word $z_1''X_0z_2''$ satisfies $z_1''z_2''=z'$ and has to enter
$(1,1)$. By the rule set associated with $(1,1)$, $X_0$ is replaced by $X_1$ and the word $z_1''X_1z_2''$ is sent
to $(1,2)$. Now using $X_k\ra X_{k+1}$ in a node $(1,k+1)$ we move the word in horizontal
direction to $(1,k+2)$, whereas the application of $X_i\ra X_i'$ results in a vertical move to $(2,k+1)$.
One of these possibilities have to be chosen in each node $(1,k+1)$. Let us assume that we choose the vertical
direction for some $j$. Then we obtain $z_1''X_j'z_2''$ in node $(2,j+1)$. The application of $X_j'\ra \varepsilon$
leads to $z_1''z_2''=z'$ in node $(3,j+1)$, if the filters of $(3,j+1)$ allow the passing. Altogether we have performed
a move from $(i,3)$ to $(j,3)$ and taking into consideration the filters the move from $i$ to $j$ can be done in
$\Gamma$, too. Therefore we have $L_a(\Gamma')\subseteq L_a(\Gamma)$.

Let us denote by $s_{i,j}$ the sequence of nodes
\[(4,i+1), (4,i), (4,i-1), \dots , (4,1), (3,1), (2,1), (1,1), (1,2), \dots (1,j+1), (2,j+1).\]
Let $w\in L_a(\Gamma)$. Then there is a sequence $x_i,t_1,t_2,\dots ,t_n,x_o$ of nodes which are entered in this
succession in the computation on $w$ starting in $x_i$. Then we have in $\Gamma'$ a computation on $w$ which follows
the sequence
\[(3,x_i+1),s_{x_i,t_1},(3,t_1+1),s_{t_1,t_2},(3,t_2+1),s_{t_2,t_3} \dots (3,t_n+1)s_{t_n,x_o}(3,x_o+1).\]
Thus $w\in L_a(\Gamma')$, too. Therefore we also have $L_a(\Gamma)\subseteq L_a(\Gamma')$.
\end{proof}

The following corollary follows immediately from Theorem \ref{th-loops}.
\begin{corollary}\label{c-grid}
For any recursively enumerable language $L$, there is a grid-AHNEP $\Gamma'$ of size at most 52 such that
$L=L_a(\Gamma')$. \qed
\end{corollary}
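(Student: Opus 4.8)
The plan is simply to chain the two structural results that precede this corollary; no fresh construction is needed. First I would fix a recursively enumerable language $L$ together with an arbitrary Turing machine accepting it, and invoke Theorem~\ref{th-loops}. That theorem supplies an AHNEP $\Gamma$ of size $12$ whose underlying graph is a complete graph carrying a loop at every node and which satisfies $L_a(\Gamma)=L$. This is precisely the hypothesis required by Theorem~\ref{th-grid}.

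Next I would apply Theorem~\ref{th-grid} to $\Gamma$. Writing $n=12$ for its size, the theorem produces a $(4,n+1)$-grid-AHNEP $\Gamma'$ --- that is, a $(4,13)$-grid-AHNEP --- together with the guarantee $L_a(\Gamma')=L_a(\Gamma)$. Combining this with the previous step yields $L_a(\Gamma')=L$, so $\Gamma'$ is a grid-AHNEP accepting $L$, as claimed. It then remains only to count nodes: by the definition of an $(m,n)$-grid the graph has $m\cdot n$ vertices, hence a $(4,13)$-grid has $4\cdot 13=52$ nodes. Therefore the size of $\Gamma'$ equals $52$, and in particular is at most $52$.

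I do not expect any genuine obstacle here, since both ingredients are already established and the corollary is immediate. The only matter deserving a line of verification is the bookkeeping around the chaining: that the value $n=12$ delivered by Theorem~\ref{th-loops} is an admissible input to Theorem~\ref{th-grid} (which is stated for \emph{every} complete-with-loops network), that the grid transformation preserves the accepted language exactly rather than up to some encoding, and that the produced grid has dimensions $4$ and $n+1$, so that its node count is $4(n+1)=52$. All three facts are furnished verbatim by the cited theorems, so the bound of $52$ follows without any additional argument.
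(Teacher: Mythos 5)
Your proposal is correct and is essentially the paper's own proof: the corollary is stated there as an immediate consequence of Theorem~\ref{th-loops}, obtained by feeding the size-$12$ complete-with-loops AHNEP it provides into the transformation of Theorem~\ref{th-grid}, which yields a $(4,13)$-grid-AHNEP of $4\cdot 13=52$ nodes accepting the same language. Your bookkeeping (admissibility of $n=12$, exact preservation of the accepted language, and the node count $4(n+1)$) is exactly the chaining the paper intends.
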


Obviously, any node of a grid-AHNEP is connected with at most four other nodes. Thus, in order to accept
recursively enumerable languages it is sufficient to consider graphs where each node has a degree $\leq 4$. This
result is slightly improved by the following statement.

\begin{theorem}\label{th-degree}
For any recursively enumerable language $L$, there is a AHNEP $\Gamma$ such that any node of the underlying graph
has a degree $\leq 3$ and $L_a(\Gamma)=L$.
\end{theorem}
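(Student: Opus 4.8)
The plan is to build directly on Theorem~\ref{th-loops} rather than on the grid of Corollary~\ref{c-grid}, since all we really have to achieve is to realise all-to-all communication together with a loop at every node by means of a graph of maximum degree $3$. So fix, by Theorem~\ref{th-loops}, an AHNEP $\Gamma_0$ of size $12$ whose underlying graph is complete with a loop at every node and satisfies $L_a(\Gamma_0)=L$. I would keep the twelve \emph{computation nodes} of $\Gamma_0$ unchanged — same rule sets, same filters, same $\alpha$- and $\beta$-labels, and in particular the same input and output nodes — and only rewire how they are connected.

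First I would introduce a \emph{transparent} processor: a node whose rule set is a single identity-acting rule (for instance $a\ra a$ for some $a\in U$, whose $*$-action returns every word unchanged) and whose four context sets $\PI,\FI,\PO,\FO$ are all empty, so that every word passes both its input and its output filter. Such a router neither alters the words it holds nor blocks any communication; it merely relays. I would then arrange ten such routers into a path $r_1,\dots,r_{10}$ and attach each of the twelve computation nodes as a pendant (degree-$1$) leaf: two leaves on each of the two end routers and one leaf on each of the eight interior routers. Then every interior router has degree $3$ (two path edges plus one pendant), every end router has degree $3$ (one path edge plus two pendants), and every computation node has degree $1$; hence the resulting network $\Gamma'$ (with $22$ nodes, though no size bound is claimed) has maximum degree $3$.

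Correctness reduces to $L_a(\Gamma')=L_a(\Gamma_0)$, which I would obtain from the reachability equivalence: for every word $z$ and computation node $p$, the pair $(z,p)$ occurs in some configuration of $\Gamma'$ if and only if it occurs in some configuration of $\Gamma_0$. For the direction $\Gamma_0\Rightarrow\Gamma'$ I would induct on the alternating evolutionary/communication steps of $\Gamma_0$: an evolutionary step at a computation node is copied identically in $\Gamma'$ (same rules), while a communication that moves a word out of $q$ and into $q'$ in the complete graph of $\Gamma_0$ is reproduced by letting the word leave $q$ (it passes $\PO_q$), flood the router path unchanged, and finally enter $q'$ (it passes $\PI_{q'}$). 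Because the routers form a connected backbone and the computation nodes are leaves, such a router-only path always exists, and the loop of $\Gamma_0$ at $q$ is simulated by the word flooding back to $q$; the only cost is a routing delay, which is irrelevant for the accepting language. For the converse I would trace any word $z$ sitting at a computation node $p$ backwards along the routers: since routers never change or create words and no inter-computation path passes through a third computation node, $z$ must have been emitted unchanged by the last computation node $q$ it visited, where it was produced by an evolutionary step and passed $\PO_q$ — and the complete graph with loops of $\Gamma_0$ permits exactly the communication $q\to p$. Applying the equivalence at the output node $x_o$ yields $L_a(\Gamma')=L_a(\Gamma_0)=L$.

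The step I expect to be the main obstacle is this converse direction, namely ruling out spurious acceptance: one must argue that flooding through the routers, and in particular a word returning to the node that emitted it, can never create a word unavailable to $\Gamma_0$. This is precisely where I would invoke that $\Gamma_0$ carries a loop at \emph{every} node, so that re-entry and re-evolution are already legal moves of $\Gamma_0$ — had we started from a loopless network this direction would break. A second, more technical point is that the routers must act as the identity under the evolution-step semantics $C'(x)=M_x^{\alpha(x)}(C(x))$: an empty rule set would \emph{delete} all words, which is exactly why the transparent node is equipped with the identity-acting rule $a\ra a$. Finally I would remark that, since only the accepting language is claimed and no deciding property, the extra idle and flooding steps introduced by the routers are harmless, affecting only the timing and never the set of words that eventually reach $x_o$.
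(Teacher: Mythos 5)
Your construction is correct, but it takes a genuinely different route from the paper. The paper obtains Theorem~\ref{th-degree} as a two-line corollary of the grid construction: it takes the $(4,n+1)$-grid network built in the proof of Theorem~\ref{th-grid} and observes that the horizontal edges inside the two middle rows are redundant for the simulation, so deleting them leaves a graph of maximum degree $3$ accepting the same language. You instead bypass the grid entirely and work directly from Theorem~\ref{th-loops}, replacing the complete-with-loops topology by a ``caterpillar'': a path of transparent router nodes with the twelve computation nodes attached as degree-one leaves, correctness being established by a flooding/reachability argument. This is really the paper's own star trick from Theorem~\ref{starcompl} (one transparent hub relaying all traffic) stretched into a path of hubs so that no node exceeds degree $3$; what it buys is a self-contained proof independent of the grid machinery and a much smaller network ($22$ nodes versus $52$), at the price of having to argue correctness from scratch rather than inheriting it from Theorem~\ref{th-grid}. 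Two remarks. First, your insistence on the identity-acting rule $a\ra a$ is a genuine point of care: under the paper's definition $M^{\alpha}(w)=\bigcup_{\s\in M}\s^{\alpha}(w)$, an empty rule set erases every word during an evolutionary step, a subtlety the paper's own star construction (whose hub has ``no rules'') silently glosses over. Second, your reachability equivalence should be asserted for the computations continued past the halting conditions, since the actual computations stop as soon as acceptance or stagnation occurs, so the literal ``$(z,p)$ occurs in $\Gamma'$ iff it occurs in $\Gamma_0$'' fails for pairs scheduled after the halting time; this is repaired by noting that, by determinism, two identical consecutive same-type configurations force the continued computation to be periodic from then on, so no word can reach $x_O$ afterwards, whence acceptance is equivalent to a word reaching $x_O$ in the un-halted computation and your equivalence, applied to un-halted computations, yields $L_a(\Gamma')=L_a(\Gamma_0)=L$. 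With that adjustment your argument is complete.
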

\begin{proof}
We consider the network constructed in the proof of Theorem \ref{th-grid}. It is obvious,
that the edges $((i,j), (i+1,j))$ for $1\leq j\leq n$ and $2\leq j\leq 3$ can be omitted without changing the
accepted language. Clearly, after the cancellation the maximal degree of the nodes is three.
\end{proof}

We do not know whether or not this result is optimal. If the degree is bounded by 2, one gets chains of nodes
(see the remark following the definition of ring-AHNEPs). We leave as an open question whether AHNEPs with chains as
underlying graphs are able to accept all recursively enumerable languages.

\section{Ring-AHNEPs}
With respect to ring-AHNEPs we were not able to prove that all recursively enumerable languages can be accepted.
However, ring-AHNEPs also have a large power since they are able to simulate the universal $2$-tag systems and
to solve NP-complete problems in polynomial time.

\begin{theorem}\label{ring-tag}
For every $2$-tag system $T=(V,\phi)$ there exists a $\circ$AHNEP $\Gamma$ of size $9$ such that $L(\Gamma)=\{w\mid \mbox{ $T$ halts on $w$}\}$.
\end{theorem}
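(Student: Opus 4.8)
The plan is to simulate a $2$-tag system $T=(V,\phi)$ on a $\circ$AHNEP whose ring-nodes form a cycle together with the center output node $x_0$. Recall that a $2$-tag system repeatedly reads the first symbol $a$ of the current word, deletes the first two symbols, and appends $\phi(a)\in V^*$; it halts when the word has length at most one (or, in the standard convention, when a designated halting symbol is read). I would fix the input node to be one of the ring-nodes, say $x_1$, and arrange the ring so that one full tag-step of $T$ corresponds to one traversal around the ring. The encoding I would use keeps the current tag-word on the tape together with a boundary marker; since the relevant operations are ``read and remember the leading symbol'', ``delete two symbols from the left'', and ``append a fixed string on the right'', the left/right action modes $l$ and $r$ of the evolutionary operations are exactly the right primitives, and the $*$-mode together with permitting/forbidding filters lets me route words based on which symbol was read.

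The key steps, in order, are as follows. First I would design the state alphabet: for each $a\in V$ I introduce marker symbols (e.g.\ a symbol recording ``I have just read leading letter $a$ and must now output $\phi(a)$''), so that the network alphabet $U$ contains $V$ plus these auxiliary symbols, while the input alphabet stays $V$. Second, I would lay out the nine ring-nodes as a pipeline for a single tag-step: a node that reads and records the first symbol using a substitution rule in mode $l$ guarded by a permitting filter, two successive nodes that delete the first two symbols using deletion in mode $l$, a small block of nodes that append the letters of $\phi(a)$ on the right using insertion in mode $r$, and a node that cleans up the marker and restores a plain tag-word ready to re-enter $x_1$. Since $\phi$ maps into words of bounded length over $V$, appending $\phi(a)$ needs only finitely many insertion nodes, and with at most nine ring-nodes I must reuse nodes across different $a$ by using the recorded-symbol marker in the filters to select which insertions fire. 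Third, I would handle halting: a dedicated configuration (leading halting symbol, or length $\le 1$) must produce a word carrying a symbol that passes the input filter $\PI_{x_0}$ of the center node, so that the computation halts with $x_0$ nonempty exactly when $T$ halts. The forbidding filters on the ring-nodes must exclude that halting symbol so the word is shunted into $x_0$ rather than continuing around the ring.

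The main obstacle will be the filtering discipline forced by the ring topology: each ring-node is adjacent only to its two neighbours and to $x_0$, so I cannot broadcast a word to an arbitrary node as in a complete graph. I must guarantee that after each evolutionary step the intended word can pass the output filter of its current node and the input filter of \emph{only} the correct next node, while being rejected (and hence not lost, nor mis-routed into $x_0$) by the alternatives. Concretely, the permitting/forbidding contexts have to encode the ``phase'' of the tag-step so tightly that the cyclic adjacency never allows a word to slip forward or backward into a wrong phase; getting these random-context conditions mutually exclusive across the reused insertion nodes, and simultaneously consistent with the center-node routing used only at halting, is the delicate part. The remaining bookkeeping --- verifying $L(\Gamma)=\{w\mid T\text{ halts on }w\}$ by an induction on the number of completed tag-steps, and checking that non-halting computations of $T$ yield non-accepting (non-terminating) computations of $\Gamma$ --- follows routinely once the phase invariant is established.
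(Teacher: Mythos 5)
Your proposal has genuine gaps, and two of them are concrete failures in this paper's model rather than mere missing detail. First, you read the leading symbol with ``a substitution rule in mode $l$'', but in this paper substitution rules are given only the $*$-action; the $l$ and $r$ modes are defined exclusively for deletion and insertion rules. So the very first gadget of your pipeline does not exist. The standard workaround is to mark \emph{some} occurrence of a letter $a$ by a $*$-mode substitution $a\ra \bar{a}$ and then certify that the marked occurrence really is leftmost by a later $l$-mode deletion (words in which a wrong occurrence was marked are destroyed, since by the case analysis in the definition $\sigma^l$ produces nothing on them); this costs extra nodes and extra phases around the ring, which matters because your budget is tighter than you think: a $\circ$AHNEP of size $9$ has only $8$ ring-nodes, since the center output node counts toward the size, not the ``nine ring-nodes'' your pipeline uses.

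Second, your mechanism for appending $\phi(a)$ --- ``using the recorded-symbol marker in the filters to select which insertions fire'' --- misreads the model: filters never guard rule application; every rule of a node applies to every word present in it, and filters only route the resulting words afterwards. Hence an insertion node containing $\varepsilon\ra c$ for all letters $c$ occurring in productions appends all of them nondeterministically, and the wrong outcomes must be killed later by filters. But random-context filters test only presence or absence of symbols, so if you insert plain letters of $V$ they can neither distinguish the word with the correct letter appended from one with a wrong letter appended, nor count how many letters of $\phi(a)$ have already been placed. You are forced to insert position-coded auxiliary symbols (such as $[a,j]$ for ``the $j$-th letter of $\phi(a)$'') and decode them afterwards by substitutions; making that machinery, the left-end verification, the ring routing, and the halting detection (note that ``length $\leq 1$'' is also untestable by random-context filters, so you must adopt the halting-symbol convention) all fit into $8$ ring-nodes is precisely the content of the theorem. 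Your last paragraph concedes this ``delicate part'' instead of carrying it out, so what you have is a plausible plan whose feasibility at size $9$ is exactly what remains unproven; since the paper itself omits its proof for space reasons, none of these constructions can be borrowed from it either.
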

We omit the proof by limitedness of space.

Once again, we remark that this result does not give a characterization of what class of languages $\circ$AHNEPs accept. We only proved that for any
recursively enumerable language one can construct a $\circ$AHNEP that accepts a special encoding of that language. Since $2$-tag systems efficiently
simulate deterministic Turing machines (\cite{woods}), the previous result also shows that $\circ$AHNEPs with $8$ nodes
also accept efficiently encodings of the languages accepted efficiently by deterministic Turing machines.

A direct consequence of Theorem \ref{ring-tag} is the following assertion.
\begin{corollary} There exists a universal $\circ$AHNEP with $9$ nodes. \qed
\end{corollary}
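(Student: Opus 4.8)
The plan is to combine Theorem~\ref{ring-tag} with the classical fact that the family of $2$-tag systems contains a universal member. First I would invoke the well-known result from computability theory (see, e.g., \cite{woods}) that there is a fixed $2$-tag system $T_u=(V_u,\phi_u)$ which is \emph{universal}: for every Turing machine $M$ and input $x$ there is an effectively computable word $\lf M,x\rf$ over $V_u$ such that $T_u$ halts on $\lf M,x\rf$ if and only if $M$ halts on $x$. In particular the set $H_u=\{w\mid T_u\mbox{ halts on }w\}$ is a representative of a complete recursively enumerable set.

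Next I would apply Theorem~\ref{ring-tag} to this particular system $T_u$. The theorem is stated for an arbitrary $2$-tag system and the size bound it yields does not depend on $T$; hence specialising $T$ to $T_u$ produces a $\circ$AHNEP $\Gamma$ of size exactly $9$ with $L(\Gamma)=H_u$. No increase in the number of nodes occurs, so the bound of $9$ is inherited directly.

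Finally I would argue that $\Gamma$ is universal. Since $\Gamma$ accepts a word precisely when $T_u$ halts on it, and $T_u$ halts on $\lf M,x\rf$ exactly when $M$ halts on $x$, the single fixed network $\Gamma$ accepts $\lf M,x\rf$ if and only if the arbitrary Turing machine $M$ halts on the arbitrary input $x$. This is exactly what it means for $\Gamma$ to be a universal $\circ$AHNEP, and it proves the corollary.

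The argument is a direct transfer, so there is no genuine mathematical obstacle; the only point requiring care is terminological. One must ensure that the notion of universality attributed to the $\circ$AHNEP is the encoding-dependent one inherited from the universal $2$-tag system, exactly as in the star case (cf.\ the Corollary following Theorem~\ref{2tag}). Because Theorem~\ref{ring-tag} guarantees that $L(\Gamma)$ equals the halting set of $T_u$ verbatim, universality transfers with no loss, and the number of nodes is exactly the $9$ provided by that theorem.
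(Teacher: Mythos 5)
Your proposal is correct and matches the paper's own argument: the paper derives this corollary as a direct consequence of Theorem~\ref{ring-tag} combined with the existence of a universal $2$-tag system, exactly as you do (mirroring the corollary following Theorem~\ref{2tag} in the star case). Your explicit spelling-out of the encoding-dependent notion of universality is just a more detailed rendering of the same one-line deduction.
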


Finally, in this section we present an application of $\circ$AHNEPs to efficient problem solving.
First, let us recall what we mean by solving a problem with AHNEPs (see \cite{mscs}). A correspondence between decision problems and languages can be done via an encoding function which transforms an instance of a given
decision problem into a word, see, e.g., \cite{17}. We say that a decision problem $P$ is solved in time $O(f(n))$ by AHNEPs if there exists a family
$\cal G$ of AHNEPs such that the following conditions are satisfied:
\begin{btlists*}
\item The encoding function of any instance $p$ of $P$ having size $n$ can be computed by a deterministic
Turing machine in time $O(f(n))$.
\item For
each instance $p$ of size $n$ of the problem one can effectively construct, in deterministic time $O(f(n))$, an
AHNEP $\Gamma(p)\in \cal G$ which
decides, again in time $O(f(n))$, the word encoding the given instance. This means that the word is decided if and only if the solution to the given instance of the problem is ``YES''. This effective construction is called an $O(f(n))$ time solution to the considered problem.
\end{btlists*}

Satisfiability is perhaps the best studied NP-complete problem because one arrives at it from a large number of practical problems. It has direct
applications in mathematical logic, artificial intelligence, VLSI engineering, computing theory, etc. It can also be met indirectly in the area of
constraint satisfaction problems. We now mention a result on getting all solutions to an instance of 3CNF-SAT (see \cite{17}) using $\circ$AHNEPs.

\begin{theorem}\label{satis}
The satisfiability of formulas given in 3-Conjunctive Normal Form can be solved in polynomial time by $\circ$AHNEPs.
\end{theorem}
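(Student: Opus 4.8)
The plan is to build, for each 3CNF formula $F=C_1\wedge\dots\wedge C_m$ over variables $v_1,\dots,v_k$, a $\circ$AHNEP $\Gamma(F)$ that generates all $2^k$ truth assignments in parallel and then tests them against the clauses, exploiting the fact that a configuration is a \emph{set} of words so that branching costs only polynomially many steps. I would split the ring-nodes into a generation segment $G_1,\dots,G_k$ followed immediately by an evaluation segment $E_1,\dots,E_m$, placed consecutively around the ring, while the center-node $x_0$ plays its forced role as the output node.

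First I would handle generation. I encode a candidate assignment as $b_1\cdots b_k$ with $b_i\in\{0_i,1_i\}$, and start from a seed $Z_1\cdots Z_k e_1\cdots e_m$ placed in $G_1$, whose suffix $e_1\cdots e_m$ holds one placeholder per clause and is ignored during generation. Each $G_i$ is a substitution node with rules $Z_i\to 0_i$ and $Z_i\to 1_i$, weak filters, and input permitting context $\{Z_i\}$. Applying both rules doubles the word-set, so after the word has traversed $G_1,\dots,G_k$ all $2^k$ assignments appear as distinct words; forward flow is enforced because $G_{i+1}$ demands $Z_{i+1}$ while $G_i$ has already consumed $Z_i$, so no word re-enters an earlier generation node. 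This uses $\bigo(k)$ steps.

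Next comes evaluation. For clause $C_j=\ell_{j,1}\vee\ell_{j,2}\vee\ell_{j,3}$ I let $P_j$ be the set of the three symbols that make the respective literals true (namely $1_p$ for a positive literal $v_p$ and $0_p$ for a negative $\neg v_p$). The node $E_j$ carries the single substitution rule $e_j\to d_j$, with weak input filter whose permitting set is $P_j$ and whose forbidding set is $\{d_j\}$. By the semantics of $\varphi^{(w)}$ a word may enter $E_j$ exactly when it satisfies $C_j$ and has not yet been marked for it; inside the node the placeholder $e_j$ is turned into the marker $d_j$. Using substitution rather than insertion is crucial: a word that cannot leave the node is not blown up, since $e_j\to d_j$ leaves a word lacking $e_j$ unchanged, and the forbidding set blocks re-entry, so a word is pushed along the ring and receives each $d_j$ at most once.

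Finally I would settle correctness and complexity. The center $x_0$ is reached only through a strong input filter that requires all of $d_1,\dots,d_m$ and forbids every $Z_i$ and every $e_j$; hence words still in generation, or carrying an unconverted placeholder because some clause is unsatisfied, are refused, and a word is admitted exactly when its assignment satisfies every clause. If $F$ is satisfiable some assignment-word collects all markers and enters $x_0$, so condition (i) fires and $\Gamma(F)$ accepts; if $F$ is unsatisfiable no word ever acquires the full marker set, and since every word has length bounded by $k+m+\bigo(1)$ the reachable word-set is finite, so the configuration sequence must eventually repeat and condition (ii) forces rejection, so $\Gamma(F)$ \emph{decides} the encoding of $F$. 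The computation takes $\bigo(k+m)=\bigo(|F|)$ steps, $\Gamma(F)$ is constructible from $F$ by a deterministic Turing machine in polynomial time, and the encoding is polynomial, so $\{\Gamma(F)\}$ is an $\bigo(|F|)$-time solution in the sense of \cite{mscs,17}. I expect the main obstacle to be reconciling the \emph{disjunctive} satisfaction of one clause with the \emph{conjunctive} global requirement using only random-context filters on an undirected ring whose center is forced to be the output: a single filter cannot simultaneously test ``progressed past clause $j-1$'' and ``satisfies clause $j$'', and the resolution is exactly the division of labour above, a weak permitting filter for satisfaction, a per-clause substitution with a forbidden marker for non-repetition and bounded length, and one strong filter at the center for the final conjunction.
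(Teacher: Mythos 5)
Your overall strategy is exactly the one the paper sketches (the paper itself omits the detailed construction): generate all $2^k$ assignments in a massively parallel way along one segment of the ring, then evaluate the clauses along a second segment, with the forced center/output node $x_0$ admitting only fully marked words. Your construction is essentially sound, and for a reason worth making explicit: since all rules are substitutions, every word carries exactly one symbol of $\{Z_i,0_i,1_i\}$ per variable and one of $\{e_j,d_j\}$ per clause, assignment symbols are never rewritten once created, and therefore any word passing the strong filter of $x_0$ encodes a consistent assignment that contained a symbol of $P_j$ when $d_j$ was created -- hence a satisfying assignment. This soundness is path-independent, which also neutralizes the ``spurious'' words you do not discuss: because the ring closes, words leaving $G_1$ are offered to $E_m$ as well, and partially generated words can wander backwards through the evaluation segment collecting markers; none of this can produce a false accept.

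There is, however, one genuinely flawed step: your halting argument for unsatisfiable formulas. You claim that since word lengths are bounded, ``the configuration sequence must eventually repeat and condition (ii) forces rejection.'' Condition (ii) does not fire on eventual repetition; it requires two identical configurations obtained in \emph{consecutive} evolutionary steps or \emph{consecutive} communication steps. A deterministic computation over a finite configuration space is only guaranteed to become periodic, and if the period exceeds one, condition (ii) never holds and the computation runs forever, so the network would not decide anything -- and in any case mere finiteness could never yield the polynomial time bound you claim. The repair is immediate from facts you already established: entry into $G_i$ requires $Z_i$ and the evolutionary step destroys it, entry into $E_j$ requires absence of $d_j$ and converts $e_j$ into $d_j$, and every output filter expels every word. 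Hence each word is ejected at every communication step and can make at most $k+m$ node entries in its lifetime; in the unsatisfiable case it can never enter $x_0$, so within ${\mathcal O}(k+m)$ steps every word is lost, all configurations become and remain empty, two consecutive identical communication configurations occur, and the network rejects within the claimed ${\mathcal O}(|F|)$ bound. With this replacement for your final argument, the proof is complete and coincides in approach with the paper's outline.
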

The proof of this theorem is based on the following idea: for a formula over $n$ variables, we generate in
$\bigo (n^2)$ AHNEP-time all the possible assignments for these variables, in a massively parallel manner; then,
if the formula is the conjunction of $m$ clauses, we evaluate it, for all the possible assignments, in parallel,
in $\bigo(m)$ time. By limitedness of space we omit the detailed proof.
Also, the result of the previous theorem can be extended, using a very
similar proof, for a class of AHNEPs with simpler underlying graph,
namely AHNEPs in which every node has degree $2$ (or, simpler, AHNEPs
having as underlying graph a chain of nodes): the satisfiability of
formulas given in 3-Conjunctive Normal Form can be decided in
polynomial time by such AHNEPs.

Note that this result does not say much about the
computational power of AHNEPs having the underlying graph a chain; we
were not able to show that they can be used to accept arbitrary
recursively enumerable languages (as we did for star-AHNEPs and grid-AHNEPs, in
Theorem \ref{starcompl} and, respectively, Corollary~\ref{c-grid}), or even encodings of such languages (as we
did for ring-AHNEPs, in Theorem~\ref{ring-tag}). In fact, we only
showed that for all the instances of the Satisfiability problem,
having the same number of variables, one can construct an AHNEP that
decides for which formulas there exists an assignment of the variables
that make them true, and for which there is no such assignment; it is
important to note that a new network should be constructed every time
we have to check a formula with a different number of variables.

Besides showing the completeness of star-AHNEPs and grid-AHNEPs, Theorem \ref{starcompl} and Corollary \ref{c-grid} show that networks with star or grid topology can be used to efficiently solve NP-complete problems. However, as far as problem solving is concerned, we can easily
adapt the constructions presented in \cite{mcu,mscs} to see how NP-complete problems can be solved efficiently by
complete AHNEPs with loops; thus, we can easily obtain efficient solutions implemented on star-AHNEPs, or grid-AHNEPs, for the same problems.

\bibliographystyle{eptcs}

\end{document}